\definecolor{webgreen}{rgb}{0,0.4,0}
\definecolor{webbrown}{rgb}{0.6,0,0}
\definecolor{purple}{rgb}{0.5,0,0.25}
\definecolor{darkblue}{rgb}{0,0,0.7}
\definecolor{darkred}{rgb}{0.7,0,0}
\definecolor{darkgreen}{rgb}{0,0.7,0}
\newtheorem{prop}{\sc Proposition}
\newtheorem{lemma}{\sc Lemma}
\newtheorem{defn}{\sc Definition}
\newtheorem*{theorem*}{\sc Theorem}
\newtheorem{theorem}{\sc Theorem}
\title{Walrasian equilibrium: an alternate proof of existence and lattice structure}
\author{\small{Komal Malik}\thanks{Komal Malik, Indian Statistical Institute, New Delhi and Shiv Nadar Institution of Eminence;  Email: \texttt{komal.malik@snu.edu.in} }}
\date{\small{\today}}
\begin{document}
\maketitle
\begin{abstract}
We consider a model of two-sided matching market with money where buyers and sellers trade
indivisible goods with the feature that each buyer has unit demand and each seller has unit supply. The result of the
existence of Walrasian equilibrium and lattice structure of equilibrium price vectors is well
known. We provide an alternate proof for the existence and lattice structure using Tarski’s
fixed point theorem.
\end{abstract}

\noindent {\it Keywords}: Walrasian equilibrium, assignment game, fixed point theorem, lattice
\section{Introduction}

We consider a model of a two-sided matching market with money where buyers and sellers trade indivisible
goods. Each buyer and seller has a unit demand and unit supply, respectively. This model is known as \textit{assignment model} and a number of markets can be captured by this model. 
Examples include labour markets matching workers with jobs with
salary as side payments, and marriage markets with dowry.
Such markets have been studied in great detail. 
One of the first papers dealing with equilibrium in these markets was by \cite{gale1960}, who showed that equilibrium prices exist for such markets when agents exhibit quasilinear preferences. The result of existence was extended in \cite{shapley1972} for the assignment market to show that the set of Walrasian equilibrium price vectors forms a lattice and the core coincides with the set of Walrasian equilibriums. \cite{Kanecko1982} shows the existence of Walrasian equilibrium without the assumption of transferable utility.
Later, \cite{quinzii1984} provided a unified model of two-sided markets, and presented the existence of the Walrasian equilibrium
and the core of the game associated with the exchange model coincides with the Walrasian equilibrium under certain assumptions. The general proof of the lattice structure of the core in two-sided matching with preference ordering on both sides was shown in \cite{demange1985}, which implies that allowing for non-quasilinear utilities in an assignment game, the result that the set of Walrasian equilibrium price vectors forms a lattice holds.

The results of the existence and lattice form of the Walrasian equilibrium price vector raise curiosity as to whether the same could be proved by a fixed point theorem on a lattice. The natural choice of fixed point theorem would be Tarski's fixed point theorem which states that the set of fixed points of a monotone function between two complete lattices forms a complete lattice. 
We provide an alternate proof for the existence and lattice form of the Walrasian equilibrium price vector using Tarski’s fixed point theorem, the ideas of which are outlined as follows.
We construct a monotonic function from a complete lattice to a complete lattice. We call this \textit{Price-adjusting} function. This function is defined in such a way that the fixed points of the function coincide with the Walrasian equilibrium price vectors, and its proof uses the characterization of the Walrasian equilibrium price vector given by \cite{DebasisTalman}. Although their paper assumes quasi-linearity in their model, their characterization result does not depend on this assumption, which allows us to use it in our model. 
We use Tarski's fixed point theorem on the function to show that the Walrasian equilibrium exists and the equilibrium price vector forms a complete lattice. 

\section{Preliminaries}
The two-sided market with unit demand, and unit supply where the sellers just value the price being paid for the object she is endowed can be modeled as a market with just buyers with unit demand and potentially having zero value. We use the same notation as \cite{DebasisTalman}.

Let $N$ be a finite set of buyers, and $M=\{a,b,\dots,m \}$ be a finite set of indivisible objects. Each buyer can be assigned at most one object. The good $0$ is a dummy good that can be assigned to more than one agent. 

An \textit{allocation} $\mu$ is an ordered sequence of objects $(\mu_1, \mu_2, \dots, \mu_n)$ such that if $\mu_i= \mu_j $ then $\mu_i=\mu_j=0.$ 
 A price vector is denoted by $\mathbf{p}:= (p_a, p_b, \dots , p_m)$, where $p_x$ is the price of the object $x$. We use the notation $p_{-a}:= (p_b,p_c \dots, p_m)$ to represent the price of all objects except the price of the object $a$. Without loss of generality, let $p_0= 0.$

  Each agent has a preference over $(M \cup \{0\}) \times \mathbb{R}$. Let $R_i$ denote the preference relation over $(\mu_i, p_{\mu_i}),$ where $p_{\mu_i}$ is the price for the object $\mu_i.$
A buyer's demand depends on the price vector $\mathbf{p}.$ A buyer $i'$s demand is given by $D_i(\mathbf{p})$ where

$$D_i(\mathbf{p})= \{ x: (x,p_x) ~R_i~(y, p_y), \forall y \in M \cup \{0\} \}.$$

\begin{defn}
A Walrasian Equilibrium (WE) is a price vector $\mathbf{p}$ and a feasible allocation $\mu$ such that
$$\mu_i \in D_i(\mathbf{p}) ~  \text{for all } i \in N  \hspace{4.7cm} (\textbf{WE-1})$$ and 
$$p_j=0~ \text {for all } j \in M \text{ that are unassigned in } \mu~~~~~(\textbf{WE-2})$$

If $(\mu,\mathbf{p})$ is a WE, then $\mu$ is a Walrasian equilibrium allocation and $\mathbf{p}$ is a Walrasian equilibrium price vector.

\end{defn}

We define demanders of  set of goods $S \subseteq M$ at price vector $\mathbf{p}$ as
$$U(S, \mathbf{p})= \{i \in N: D_i(\mathbf{p}) \cap S \ne \emptyset \}.$$

We define exclusive demanders of set $S \subseteq M$ at price vector $\mathbf{p}$ as 
$$ O(S, \mathbf{p})= \{ i \in N: D_i(\mathbf{p}) \subseteq S \}.$$

Let $M^+(\mathbf{p})$ be the set of goods that have strictly positive prices.
\begin{defn}
A set of goods S is (weakly) underdemanded at price vector $\mathbf{p}$ if $S \subseteq M^+(\mathbf{p}) $ and $\left| U(S,p) \right|< (\le) \left| S\right|$
\end{defn}

\begin{defn}
A set of goods S is (weakly) overdemanded at price vector $\mathbf{p}$ if $S \subseteq M $ and $\left| O(S,p) \right|> (\ge) \left| S\right|$.
\end{defn}

\begin{defn}
A good $x$ is minimally overdemanded (underdemanded) at price vector $\mathbf{p}$ if there exists a set $S$ containing $x$ which is overdemanded (underdemanded) but for all sets $T \subset S$, containing $x$, $T$ is not overdemanded (underdemanded).
\end{defn}

Let $\mathbf{H}= (H,H, \dots)$ with $H= \max_{x \in M} \max_{i \in N} WP(\{x\},0; R_i)$. $H$ is high enough price for which no good is minimally overdemanded.\footnote{ Since the price of each good is $H$, each agent would demand good 0. As a result, exclusive demanders of any set $S$ s.t. $0 \notin S$ would be zero and thus, no set would be overdemanded.}

\subsection{Overdemand and Underdemand}

We define two \textit{critical/tipping} prices as follows:
\begin{enumerate}
\item $\sup ~\mathcal{O}_a(p_{-a})=\sup \big[ \{ p_a: a \textrm{ is minimally overdemanded at }(p_a,p_{-a})\} \cup \{0\}\big]$
\item $\inf ~\mathcal{U}_a(p_{-a})=\inf \{ p_a: a \textrm{ is minimally underdemanded at }(p_a,p_{-a})\} $ 
\end{enumerate}

\paragraph{}
Note that $\sup ~\mathcal{O}_a(p_{-a})$ is highest price for which $a$ could be minimally overdemanded. For all $p_a>\sup ~\mathcal{O}_a(p_{-a}), $ $a$ is not minimally overdemanded. Similarly, $\inf ~\mathcal{O}_a(p_{-a}) $ is the lowest price at which $a$ could be underdemanded. For all $p_a< \inf ~\mathcal{U}_a(p_{-a})$, $a$ would not be minimally overdemanded. \\\\
{\sc Fact :} For all $p_{-a} \in \mathbb{R}^{N-1}$, $ \inf ~\mathcal{U}_a(p_{-a}) \ge \sup ~\mathcal{O}_a(p_{-a}).$ \\
Suppose not, for contradiction i.e. $\inf ~\mathcal{U}_a(p_{-a}) < \sup ~\mathcal{O}_a(p_{-a})$ for some $p_{-a}$. Let $p \in( \inf ~ \mathcal{U}_a(p_{-a}) ,\sup ~\mathcal{O}_a(p_{-a})).$ But this implies that  $a$ is minimally underdemanded and minimally overdemanded  at $(p, p_{-a})$. A contradiction. 
\paragraph{}

The critical points defined above play a crucial role in defining a {\textit{price adjusting function}}.
We use tipping prices defined above to define the two \textit{neutral} prices - $\mathcal{S}_a(p_{-a})$ and $ \mathcal{I}_a(p_{-a})$. These prices are termed \textit{neutral} because the constructed \textit{price adjusting function}  is such that for all prices, $p_a \in [ \mathcal{S}_a(p_{-a}), \mathcal{I}_a(p_{-a})]$, $f_a(p_a, p_{-a})=p_a$. 
\paragraph{}
We define
\begin{enumerate}
\item 
$\mathcal{S}_a(p_{-a})= \inf ~\{ q_a: \textrm{ no object is overdemanded at  }(q_a, q_{-a}) \textrm{ where }  q_a \ge \sup ~\mathcal{O}_a(q_{-a}), ~q_{-a} \ge p_{-a}\} \text{ and }$
\item 
$\mathcal{I}_a(p_{-a})= \inf[\{q_a: \textrm{ no object is underdemanded at } (q_a, q_{-a}) \textrm{ where }q_a \ge \inf ~\mathcal{U}_a(q_{-a}), ~q_{-a} \ge p_{-a} \} \cup \{H\}]$
\end{enumerate}

The neutral prices $\mathcal{S}_a(p_{-a})$ and $\mathcal{I}_a(p_{-a})$ are well defined. We argue this below.

The set $\{ q_a: \textrm{ no object is overdemanded at  }(q_a, q_{-a}) \textrm{ where }  q_a \ge \sup ~\mathcal{O}_a(q_{-a}), ~q_{-a} \ge p_{-a}\}$ is non-empty. For high enough price $\mathbf{H}$, only good 0 would be demanded and no object would be overdemanded. Also, it is lower bounded by construction. Thus, $\mathcal{S}_a(p_{-a})$ is well defined.
Though, the set $\{q_a: \textrm{ no object is underdemanded at } (q_a, q_{-a}) \textrm{ where }q_a \ge \inf ~\mathcal{U}_a(q_{-a}), ~q_{-a} \ge p_{-a} \}$ may be empty. Adding $H$ to this set ensures that infimum is applied over a non-empty set. It is also lower bounded, by construction. Thus, $\mathcal{I}_a(p_{-a})$ is also well defined.
\paragraph{}
Finally, we define \textit{price-adjusting} function $\mathbf{f}: [\mathbf{0},\mathbf{H}] \rightarrow [\mathbf{0},\mathbf{H}]$ as follows:	
\begin{equation*}
	f_a(\mathbf{p})= \left \{
\begin{array}{l l}
	\mathcal{I}(p_{-a}) & \textrm{ if }p_a > \mathcal{I}_a(p_{-a}) \ge \mathcal{S}_a(p_{-a})  ~~~-~(I)\\
	p_a & \textrm{ if }\mathcal{S}_a(p_{-a}) \le p_a \le \mathcal{I}_a(p_{-a}) ~~~-~( II)\\
	\mathcal{S}_a(p_{-a}) & \textrm{ if } p_{a} < \mathcal{S}_a(p_{-a}) \le \mathcal{I}_a(p_{-a}) ~~~-(III)\\
	\frac{ \mathcal{S}_a(p_{-a}) + \mathcal{I}_a(p_{-a})}{2} & \text{ if }  \mathcal{I}_a(p_{-a}) < \mathcal{S}_a(p_{-a}) ~~~~~~~~~~-(IV)
	\end{array}  \right.
	\end{equation*}

To visualise how $p_a$ maps to $f_a(\mathbf{p})$, given price vector $p_{-a}$, we look at bar plot as shown in the figure below.
\footnote{Note that it is plotted for a given $p_{-a}$.} Given $p_{-a},$ we would have $\mathcal{I}_a(p_{-a}), \mathcal{S}_a(p_{-a}) \in [0,H].$ The figure $1(a)$ displays the case where $H>\mathcal{I}_a(p_{-a}) > \mathcal{S}_a(p_{-a})> 0.$

\begin{figure}[h]
    \centering

\subfloat[]{{\includegraphics[width=2cm]{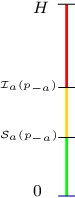} }}%
    \qquad
    \subfloat[]{{\includegraphics[width=2cm]{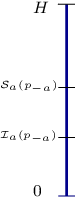}} }%
    
    \caption{Mapping of Price Adjusting function}
 \label{fig:eg1}
 
\end{figure}

For all $p_a$ in red region, the value of the function is lower than $p_a$ and equals $\mathcal{I}_a(p_{-a}).$
For all $p_a$ in green region, the value of the function is higher than $p_a$ and equals $\mathcal{S}_a(p_{-a}).$
For all $p_a$ in yellow region, the value of the function is equal to $p_a$.

Figure $1(b)$ shows the case where $\mathcal{S}_a(p_{-a})> \mathcal{I}_a(p_{-a})$. All the prices $p_a$ would lie in blue region and $f_a(p_a,p_{-a})= \frac{\mathcal{S}_a(p_{-a}) + \mathcal{I}_a(p_{-a})}{2}$.

From now on, all the points satisfying conditions $(I), ~(II), ~(III)$ and $(IV)$ are represented by red, yellow, green and blue colour, respectively.

\section{Result}

The Tarski's fixed-point theorem states that if F is a monotone
function on a complete lattice, the set of fixed points of F
forms a complete lattice. \\
The \textit{price adjusting} function defined above is monotonic (Proposition \ref{prop:1}) and the set of fixed points is equal to the set of Walrasian equilibrium price vectors (Proposition \ref{prop 2}). By Tarski's fixed-point theorem, it follows that the set of Walrasian equilibrium price vectors forms a complete lattice.

We provide the proofs of Propositions \ref{prop:1}
and \ref{prop 2} below. 
	
\begin{prop} \label{prop:1}
$\mathbf{f}(\mathbf{p})$ is a monotonic function.
\end{prop}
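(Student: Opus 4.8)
The plan is to show that $\mathbf{f}$ is monotone coordinate by coordinate: I want to prove that if $\mathbf{p}\le\mathbf{p}'$ then $f_a(\mathbf{p})\le f_a(\mathbf{p}')$ for every object $a$. Since the two prices differ only in finitely many coordinates, it suffices (by transitivity and relabelling) to treat the case where $\mathbf{p}$ and $\mathbf{p}'$ agree on all coordinates except possibly $a$ itself, OR agree on $a$ and satisfy $p_{-a}\le p'_{-a}$; a general increase is a composition of such steps. The first building block is therefore a \emph{monotonicity lemma for the neutral prices}: I would show that both $p_{-a}\mapsto\mathcal{S}_a(p_{-a})$ and $p_{-a}\mapsto\mathcal{I}_a(p_{-a})$ are nondecreasing in $p_{-a}$. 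This should follow more or less directly from the definitions, because each neutral price is an infimum of a set of the form $\{q_a : (\dots)\text{ and }q_{-a}\ge p_{-a}\}$: enlarging $p_{-a}$ shrinks (or preserves) the feasible set of pairs $(q_a,q_{-a})$ being quantified over, hence the infimum can only go up. (For $\mathcal{I}_a$ the extra element $\{H\}$ is harmless since it is a constant.)

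With that lemma in hand, the second step is to handle the dependence on $p_a$ for fixed $p_{-a}$. Here $\mathcal{S}_a(p_{-a})$ and $\mathcal{I}_a(p_{-a})$ do not move at all, so I just need to check that the piecewise formula for $f_a$ is nondecreasing in $p_a$. In the ``good'' case $\mathcal{S}_a(p_{-a})\le\mathcal{I}_a(p_{-a})$ the map is: constant $\mathcal{S}_a$ on $(-\infty,\mathcal{S}_a)$, the identity on $[\mathcal{S}_a,\mathcal{I}_a]$, and constant $\mathcal{I}_a$ on $(\mathcal{I}_a,\infty)$ — visibly nondecreasing and continuous at the two breakpoints, so no jump. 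In the ``bad'' case $\mathcal{I}_a(p_{-a})<\mathcal{S}_a(p_{-a})$ (the blue region), $f_a$ is the constant $\tfrac{\mathcal{S}_a+\mathcal{I}_a}{2}$, trivially nondecreasing in $p_a$.

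The real work — and the step I expect to be the main obstacle — is the ``mixed'' move: increasing $p_{-a}$ while $p_a$ stays fixed, where I must combine the shift in $\mathcal{S}_a,\mathcal{I}_a$ with the possibility that the \emph{case} of the piecewise definition changes (e.g. from yellow $(II)$ to red $(I)$, or from the good regime into the blue regime $(IV)$, or out of it). I would organise this as a finite case analysis on the source case for $\mathbf{p}$ and the target case for $\mathbf{p}'$, and in each cell bound $f_a(\mathbf{p})$ above and $f_a(\mathbf{p}')$ below using (i) the neutral-price monotonicity lemma, (ii) the inequalities defining each case (e.g. in case $(I)$, $f_a(\mathbf{p})=\mathcal{I}_a(p_{-a})\le p_a$; in case $(III)$, $f_a(\mathbf{p})=\mathcal{S}_a(p_{-a})\ge p_a$), and (iii) for transitions into/out of $(IV)$, the sandwich $\mathcal{I}_a\le\tfrac{\mathcal{S}_a+\mathcal{I}_a}{2}\le\mathcal{S}_a$ when $\mathcal{I}_a<\mathcal{S}_a$. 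The delicate sub-cases are precisely the ones where the blue regime is entered or left, since there the value can sit strictly between $\mathcal{I}_a$ and $\mathcal{S}_a$; I expect that the key auxiliary inequality tying everything together is a cross-comparison such as $\mathcal{I}_a(p_{-a})\le\mathcal{S}_a(p'_{-a})$ and $\mathcal{S}_a(p_{-a})\le\mathcal{I}_a(p'_{-a})$ whenever the relevant regimes are in force, both of which should again reduce to the infimum-of-a-shrinking-set observation together with the Fact that $\inf\mathcal{U}_a\ge\sup\mathcal{O}_a$. Once every cell of the case table yields $f_a(\mathbf{p})\le f_a(\mathbf{p}')$, monotonicity of $\mathbf{f}$ follows.
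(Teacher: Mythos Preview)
Your proposal is correct and mirrors the paper's proof almost exactly: the paper proves your neutral-price monotonicity lemma as its Lemma~1 (via the same infimum-over-a-shrinking-set argument), your monotonicity-in-$p_a$ step as its Lemma~2, and then carries out precisely the source/target case analysis for the move in $p_{-a}$ that you outline, chaining the two to conclude. The only minor discrepancy is that the cross-comparisons you anticipate for the blue transitions turn out to be unnecessary --- the direct bounds $f_a(\mathbf{p})\le\mathcal{I}_a(p_{-a})\le\mathcal{I}_a(q_{-a})\le\tfrac{\mathcal{S}_a(q_{-a})+\mathcal{I}_a(q_{-a})}{2}$ and $f_a(\mathbf{q})\ge\mathcal{S}_a(q_{-a})\ge\mathcal{S}_a(p_{-a})\ge\tfrac{\mathcal{S}_a(p_{-a})+\mathcal{I}_a(p_{-a})}{2}$ from Lemma~1 already close every cell.
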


\begin{proof}
Suppose $\mathbf{p} \le \mathbf{q}$. We show that for all $ a \in M, $ $ f_a(\mathbf{p}) \le f_a(\mathbf{q})$. 

We start by proving useful lemmas.
\begin{lemma} 
\label{lem:1}
For  $\mathbf{p} \le \mathbf{q},$  we have $\mathcal{S}_a(p_{-a}) \le \mathcal{S}_a(q_{-a})$ and $\mathcal{I}_a(p_{-a}) \le \mathcal{I}_a(q_{-a})$. 
\end{lemma}
\begin{proof}
Suppose $ \mathbf{p} \le \mathbf{q}.$ Let $h_a  \in \{ h_a: \textrm{ no object is overdemanded at  }( h_a, h_{-a}) \textrm{ where }  h_a \ge \sup ~\mathcal{O}_a(p_{-a}), h_{-a} \ge q_{-a} \}$. Note that $h_a \ge \sup ~\mathcal{O}_a(q_{-a}) \ge \sup  \mathcal{O}_a(p_{-a})$, where last inequality follows from monotonicity of $\sup ~\mathcal{O}_a.$ Also, $h_{-a} \ge q_{-a} \ge p_{-a}$, where the last inequality follows from the fact that $\mathbf{p} \le \mathbf{q}.$ This implies $h_a \in \{ h_a: \textrm{ no object is  overdemanded at  }( h_a, h_{-a}) \textrm{ where } h_a \ge \sup ~\mathcal{O}_a(h_{-a}),  h_{-a} \ge p_{-a} \}.$ 

Thus, we have 
\begin{align*}
 &\{ h_a: \textrm{ no object is overdemanded at  }( h_a, h_{-a}) \textrm{ where }  h_a \ge \sup ~\mathcal{O}_a(p_{-a}), h_{-a} \ge q_{-a} \}\\
 & \subseteq \{ h_a: \textrm{ no object is overdemanded at  }( h_a, h_{-a}) \textrm{ where } h_a \ge \sup ~\mathcal{O}_a(h_{-a}),  h_{-a} \ge p_{-a} \}\\
 \Rightarrow & \inf ~\{ h_a: \textrm{ no object is overdemanded at  }( h_a, h_{-a}) \textrm{ where }  h_a \ge \sup ~\mathcal{O}_a(p_{-a}),h_{-a} \ge q_{-a} \} \equiv \mathcal{S}_a(q_{-a})\\
 & \ge \inf ~\{ h_a: \textrm{ no object is overdemanded at  }( h_a, h_{-a}) \textrm{ where }  h_a \ge \sup ~\mathcal{O}_a(p_{-a}),h_{-a} \ge p_{-a} \} \equiv \mathcal{S}_a(p_{-a}).
\end{align*}
Analogously, we can show that $\mathcal{I}_a(p_{-a}) \le \mathcal{I}_a(q_{-a}).$
\end{proof}

\begin{lemma}
\label{lem:2} For all $\ell_ \le h_a$, for all $\ell_{-a}$, we have $f_a(\ell_, \ell_{-a}) \le f_a(h_a, \ell_{-a}).$ 
 \end{lemma}
\begin{figure}[h]
    \centering
    \subfloat{{\includegraphics[width=2cm]{ImagesWE/RYG1.eps} }}%
    \qquad
    \subfloat{{\includegraphics[width=2cm]{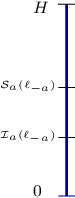} }}%
    \qquad

    \subfloat{{\includegraphics[width=7cm]{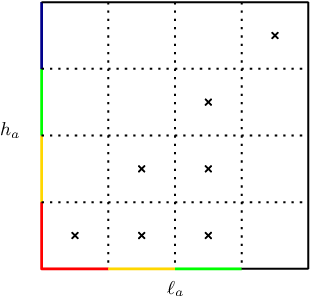} }}%

    \caption{Coloured possibility matrix, given $\ell_{-a}$}
    \label{fig:prop1}%
\end{figure}

\begin{proof}
Fix $\ell_{-a}.$ This would fix $\mathcal{I}_{a}$ and $\mathcal{S}_{a}$
at $\ell_{-a}.$ 

We start by introducing a coloured matrix.
The matrix reflects the possibility of $h_a$ and $\ell_a$ lying in a particular region. The $x-axis$ is $\ell_a$ and $y-axis$ is $h_a$. The cell in y coloured row and x coloured column would reflect if there is a possibility that $h_a$ lies in x coloured region and $\ell_a$ lies in y coloured region. Since $\ell_a \le h_a$, the only possible cases are those marked with a cross.

Now we argue that $f_a(\ell_a, \ell_{-a}) \le f_a(h_a, \ell_{-a})$.

If $\ell_a$ lies in green region, $f_a(\ell_a, \ell_{-a})= \mathcal{S}_a(\ell_{-a})$. ~$h_a$ can lie in any of the region and $f_a(h_a, \ell_{-a}) \ge \mathcal{S}_a(\ell_{-a}).$ 

If $\ell_a$ lies in yellow region, $f_a(\ell_a, \ell_{-a})= \ell_a.$ ~$h_a$ can lie in yellow region or red region. This gives $f_a(h_a, \ell_{-a}) \ge \min\{\mathcal{I}_a(\ell_{-a}), h_a\} \ge \ell_a,$ where the former inequality follows from definition of $\mathbf{p}$ and the latter inequality follows from the fact that $h_a \ge \ell_a$ and $\ell_a$ lies in yellow region.

If $\ell_a$ lies in red region, $f_a(\ell_a, h_{-a})= \mathcal{I}_a( \ell_{-a})$. ~$h_a$ can lie in red region only. This gives $f_a(h_a, \ell_{-a})=\mathcal{I}_a( \ell_{-a})$.

If $\ell_a$ lies in blue region, $f_a(\ell_a,\ell_{-a})= \frac{\mathcal{S}_a(\ell_{-a}) + \mathcal{I}_a(\ell_{-a})}{2}$. Also, $h_a$ lie in blue region, implying that $f_a(h_a, \ell_{-a})=  \frac{\mathcal{S}_a(\ell_{-a}) + \mathcal{I}_a(\ell_{-a})}{2}.$

Thus, in all cases, $f_a(\ell_a, \ell_{-a}) \le f_a(h_a, \ell_{-a}).$
\end{proof}
\medspace
Lemma \ref{lem:2} has shown monotonicity in the first argument. Now we show monotonicity in other arguments, i.e. $f_a(p_a, p_{-a}) \le f_a(p_a, q_{-a})$.
Since $p_{-a} \le q_{-a},$ by Lemma \ref{lem:1}, we get the possibility matrix as the following:

\begin{figure}
\centering
\includegraphics[width=9cm]{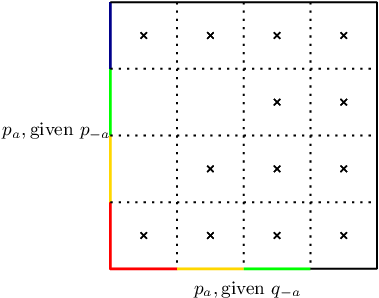}
\caption{Coloured possibility matrix, given $p_{-a}$ and $q_{-a}$}
\hspace{1in}
\end{figure}

{\sc Case 1. } $p_a$ lies in blue region for $p_{-a}$.

We have $f_a(p_a,p_{-a})= \frac{\mathcal{S}_a(p_{-a}) + \mathcal{I}_a(p_{-a})}{2}.$
If $p_a$ is blue region for $q_{-a}$ as well, then $f_a(p_a, q_{-a})= \frac{\mathcal{S}_a(q_{-a}) + \mathcal{I}_a(q_{-a})}{2} \ge  \frac{\mathcal{S}_a(p_{-a}) + \mathcal{I}_a(p_{-a})}{2}=f_a(p_a,p_{-a}), $ where inequality follows from Lemma \ref{lem:1}.
If $p_a$ not in blue region for $q_{-a}$, then $f_{a}(p_a, q_{-a}) \ge \mathcal{S}_a(q_{-a}) \ge \mathcal{S}_a(p_{-a}) \ge \frac{\mathcal{S}_a(p_{-a}) + \mathcal{I}_a(p_{-a})}{2}= f_a(p_a,p_{-a}),$ where the first inequality follows from definition of $f_a$, ~second inequality follows from Lemma \ref{lem:1} and third inequality follows from the fact that $\mathcal{S}_a(p_{-a}) \ge \mathcal{I}_a(p_{-a}).$

{\sc Case 2. } $p_a$ lies in blue region for $q_{-a}.$

We have $f_a(p_a, q_{-a})= \frac{\mathcal{S}_a(q_{-a}) + \mathcal{I}_a(q_{-a})}{2}.$
$f_a(p_a, p_{-a}) \le \mathcal{I}_a(p_{-a}) \le \mathcal{I}_a (q_{-a}) \le \frac{\mathcal{S}_a(q_{-a}) + \mathcal{I}_a(q_{-a})}{2}= f_a(p_a, q_{-a})$, where the second inequality follows from Lemma $\ref{lem:1}$ and the third inequality follows from the fact that $\mathcal{S}_a(q_{-a})> \mathcal{I}_a(q_{-a}).$

Notice that in the reduced matrix without the blue region, one just needs to show the inequality for the diagonal. In the lower triangular matrix, it is trivial from the mapping that the inequality will be satisfied.

On the diagonal, the value of the function,
$(f_a(p_a, p_{-a}), f_a(p_a, q_{-a}))$ is $(\mathcal{I}_a(p_{-a}), \mathcal{I}_a(q_{-a})), ~ (p_a, p_a)$ and $(\mathcal{S}_a(p_{-a}), \mathcal{S}_a(q_{-a}))$ for red, yellow and green region, respectively. By Lemma \ref{lem:1}, for all these possibilities, we have $f_a(p_a, p_{-a}) \le f_a(p_a, q_{-a})$. 

Thus, we have established that $f_a(p_a, p_{-a}) \le f_a(p_a, q_{-a}).$ This combined with Lemma \ref{lem:2} gives $f_a(p_a, p_{-a}) \le f_a(q_a, q_{-a}).$ 
\end{proof}

\begin{prop} \label{prop 2}
The following statements are equivalent:
\begin{enumerate}
\item $\mathbf{p}$ is fixed-point of $\mathbf{f}.$
\item $\mathbf{p}$ is Walrasian price vector.
\end{enumerate}
\end{prop}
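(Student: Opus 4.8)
The plan is to prove the equivalence by showing each direction through the characterization of Walrasian equilibrium prices from \cite{DebasisTalman}: namely, that $\mathbf{p}$ is a Walrasian equilibrium price vector if and only if no good is overdemanded at $\mathbf{p}$ and no good is underdemanded at $\mathbf{p}$. (Overdemand captures the failure of feasible assignment into demand sets, and underdemand at a positive-priced good captures the violation of \textbf{WE-2}.) Granting this characterization, the task reduces to showing: $\mathbf{f}(\mathbf{p}) = \mathbf{p}$ iff $\mathbf{p}$ has no overdemanded and no underdemanded good.

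First I would prove that a Walrasian price vector is a fixed point. Suppose $\mathbf{p}$ is Walrasian, so by the characterization no object is overdemanded and no object is underdemanded at $\mathbf{p}$; note also $\mathbf p \le \mathbf H$ since at price $H$ every agent demands good $0$, so no good can be underdemanded above $H$ and $\mathbf p$ lies in the domain. Fix $a$. Because no object is overdemanded at $(p_a,p_{-a})$ and $p_a \ge \sup\mathcal{O}_a(p_{-a})$ (as $a$ itself is not minimally overdemanded at $\mathbf p$, and $p_a \ge 0$), $p_a$ belongs to the set defining $\mathcal{S}_a(p_{-a})$, hence $\mathcal{S}_a(p_{-a}) \le p_a$. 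Symmetrically, since no object is underdemanded and $p_a \ge \inf\mathcal{U}_a(p_{-a})$ — here I must be a little careful: the relevant inequality is that $p_a$ is \emph{not} below the underdemand threshold, which follows because if $p_a < \inf\mathcal U_a(p_{-a})$ then... actually what is needed is that $p_a$ lies in the set defining $\mathcal I_a(p_{-a})$, i.e. $q_a = p_a$ satisfies ``$q_a \ge \inf\mathcal U_a(p_{-a})$'' — this may fail, and the correct argument is instead that $\mathcal I_a(p_{-a}) \le p_a$ because the infimum in its definition is taken with the constraint $q_a \ge \inf\mathcal U_a(q_{-a})$ and using the Fact $\inf\mathcal U_a \ge \sup\mathcal O_a$ together with $\mathcal S_a(p_{-a}) \le p_a$ one gets $\mathcal I_a(p_{-a}) \le \max\{\text{feasible }q_a\} \le p_a$ or $\le H$. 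I would then conclude $\mathcal S_a(p_{-a}) \le p_a \le \mathcal I_a(p_{-a})$ (so in particular case $(IV)$ is excluded), placing $p_a$ in the yellow region $(II)$, whence $f_a(\mathbf p) = p_a$; doing this for every $a$ gives $\mathbf f(\mathbf p) = \mathbf p$.

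For the converse, suppose $\mathbf{f}(\mathbf{p}) = \mathbf{p}$. Then for every $a$, $f_a(\mathbf p) = p_a$, which rules out the strictly-adjusting branches $(I)$, $(III)$, and also $(IV)$ (whose value $\tfrac{\mathcal S_a + \mathcal I_a}{2}$ equals $p_a$ only if $\mathcal S_a = \mathcal I_a = p_a$, contradicting $\mathcal I_a < \mathcal S_a$); hence $\mathcal S_a(p_{-a}) \le p_a \le \mathcal I_a(p_{-a})$ for all $a$. From $p_a \ge \mathcal S_a(p_{-a})$ and the definition of $\mathcal S_a$ as an infimum of a set of prices at which no object is overdemanded (together with the monotonicity-type argument used for Lemma~\ref{lem:1}, applied with $q_{-a} = p_{-a}$, to transfer ``no object overdemanded at the witnessing $q_a$'' down to $p_a$), I would deduce no object is overdemanded at $\mathbf p$; symmetrically $p_a \le \mathcal I_a(p_{-a})$ gives that no object is underdemanded at $\mathbf p$. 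Then the \cite{DebasisTalman} characterization delivers that $\mathbf p$ is a Walrasian equilibrium price vector.

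The main obstacle I anticipate is the second of these transfers in each direction — going from the inequality $\mathcal S_a(p_{-a}) \le p_a$ (resp. $p_a \le \mathcal I_a(p_{-a})$) back to the statement ``no object is overdemanded (resp. underdemanded) at the specific vector $\mathbf p$''. The subtlety is that $\mathcal S_a$ and $\mathcal I_a$ are defined via infima over \emph{all} $q_{-a} \ge p_{-a}$ and over prices satisfying the tipping-price constraints, so membership of $p_a$ in (or above the inf of) that set does not immediately evaluate the overdemand/underdemand condition at $(p_a, p_{-a})$ itself. I expect to resolve this exactly as in Lemma~\ref{lem:1}: specialize $q_{-a}$ to $p_{-a}$, use monotonicity of $\sup\mathcal O_a$ and $\inf\mathcal U_a$, and use the Fact $\inf\mathcal U_a(p_{-a}) \ge \sup\mathcal O_a(p_{-a})$ to show the constraint is met at $p_a$, so that ``no object overdemanded'' genuinely holds at $\mathbf p$. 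A secondary point to handle carefully is confirming $\mathbf p \in [\mathbf 0, \mathbf H]$ in the Walrasian-to-fixed-point direction, which follows from the definition of $H$ as noted in the excerpt's footnote.
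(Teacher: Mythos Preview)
Your proposal has two genuine gaps in the fixed-point $\Rightarrow$ Walrasian direction.

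First, your elimination of case $(IV)$ is incorrect. You claim that $\tfrac{\mathcal S_a(p_{-a})+\mathcal I_a(p_{-a})}{2}=p_a$ forces $\mathcal S_a(p_{-a})=\mathcal I_a(p_{-a})=p_a$, but the average of two distinct numbers can perfectly well equal $p_a$ (take $\mathcal S_a=4$, $\mathcal I_a=2$, $p_a=3$). So a fixed point \emph{can} a priori sit in branch $(IV)$, and you cannot conclude $\mathcal S_a(p_{-a})\le p_a\le\mathcal I_a(p_{-a})$ for every $a$ just from $f_a(\mathbf p)=p_a$. The paper does not try to rule out $(IV)$ up front; instead it argues by contradiction: assume some set is overdemanded, take $a$ minimally overdemanded, show $\mathcal S_a(p_{-a})>p_a$, and then split into $(III)$ versus $(IV)$. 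In $(III)$ one gets $f_a(\mathbf p)>p_a$; in $(IV)$ one uses $\mathcal S_a(p_{-a})\ge\sup\mathcal O_a(p_{-a})$ and $\mathcal I_a(p_{-a})\ge\inf\mathcal U_a(p_{-a})\ge\sup\mathcal O_a(p_{-a})$ to obtain $p_a\ge\sup\mathcal O_a(p_{-a})$, contradicting that $a$ is minimally overdemanded at $(p_a,p_{-a})$.

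Second, even after (incorrectly) placing $p_a$ in region $(II)$, your ``transfer'' from $p_a\ge\mathcal S_a(p_{-a})$ to ``no object is overdemanded at $\mathbf p$'' does not go through as written. The set defining $\mathcal S_a(p_{-a})$ ranges over \emph{all} $q_{-a}\ge p_{-a}$; knowing only that $p_a$ lies weakly above the infimum says nothing about overdemand at the specific vector $(p_a,p_{-a})$, and Lemma~\ref{lem:1} (monotonicity of $\mathcal S_a,\mathcal I_a$) does not supply the missing localization. The paper sidesteps this entirely by the contradiction route above, never needing to read ``no overdemand at $\mathbf p$'' off of $p_a\ge\mathcal S_a(p_{-a})$. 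A smaller but related issue appears in your necessity direction: to get $p_a\le\mathcal I_a(p_{-a})$ you do \emph{not} want $p_a$ to lie in the defining set of $\mathcal I_a$ (that would give the wrong inequality, as you noticed); the paper instead uses the one-line bound $\mathcal I_a(p_{-a})\ge\inf\mathcal U_a(p_{-a})\ge p_a$, the last inequality holding because $a$ is not minimally underdemanded at a Walrasian price.
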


\begin{proof}

{\sc Necessity: }
Suppose $\mathbf{p}$ is a Walrasian equilibrium. We would show that $\mathbf{f}(\mathbf{p})= \mathbf{p}$.\\
Note that since $\mathbf{p}$ is Walrasian equilibrium, no object $a \in M$ will be minimally underdemanded or overdemanded.\\
Fix an arbitrary object $a \in M$. We would argue that $\mathcal{I}_a (p_{-a}) \ge p_a \ge \mathcal{S}_a(p_{-a})$.\\
We have $\mathcal{I}_a (p_{-a}) \ge \inf  \mathcal{U}_a(p_{-a}) \ge p_a$, where the former inequality follows by definition of $\mathcal{I}_a$ and the latter inequality follows from the fact that no object is minimally underdemanded.

We show that $\mathcal{S}_a(p_{-a}) \le p_a$. Suppose not for contradiction. Also note that $p_a \ge \sup ~\mathcal{O}_a(p_{-a})$ follows from the fact that no good is overdemanded. Combining the last two inequalities, we get $\mathcal{S}_a(p_{-a})> p_a \ge \sup ~\mathcal{O}_a(p_{-a})$. By definition of $\mathcal{S}_a$, this implies some object was minimally overdemanded at $(p_a, p_{-a}).$ A contradiction.

Thus, we have  $\mathcal{I}_a (p_{-a}) \ge p_a \ge \mathcal{S}_a(p_{-a})$. By definition of $\mathbf{f}$, $f_a(\mathbf{p})= p_a.$
Since, for all $a \in M $, $f_a(\mathbf{p})= p_a$, we have $\mathbf{f}(\mathbf{p})= \mathbf{p}.$
\vspace{0.02in}\\
\noindent {\sc Sufficiency: }
Suppose $\mathbf{p}$ is a fixed-point. 
Note that by definition, $\mathcal{I}_a(p_{-a} ) \ge  \inf ~\mathcal{U}_a(p_{-a})$ and $\mathcal{S}_a(p_{-a}) \ge \sup ~\mathcal{O}_a(p_{-a})$.

We will use the characterization of WE price vector in Theorem 1  of \cite{DebasisTalman}, to show that $\mathbf{p}$ is Walrasian equilibrium price vector. The theorem is stated below.

\begin{theorem*}
    A price vector $\mathbf{p}$ is a WE price vector if and only if no set of goods is overdemanded and no set of goods is underdemanded at $\mathbf{p}.$
\end{theorem*}

We first show that no set is overdemanded at price $\mathbf{p}$.
Suppose for contradiction that at $\mathbf{p}$, there exists an overdemanded set $S$. Let $a \in S$ is minimally overdemanded at $\mathbf{p}$. 

Now we argue that $\mathcal{S}_a(p_{-a})> p_a.$ If for all $x \in S$, $h_x \in [p_x, p_x +\epsilon]$ where $\epsilon \rightarrow 0$  and for  all $x \notin S$, $h_x= p_x$, then the set $S$ will be overdemanded at $\mathbf{h}$.\footnote{Very small increase in prices would not change demand for exclusive demanders and thus, the set would still be overdemanded.} As a result, $\mathcal{S}_a(p_{-a}) > p_a$. 

Thus, only condition $(III)$ or $(IV)$ could hold true. By condition $(III) $, $f_a(\mathbf{p})= \mathcal{S}_a(p_{-a})> p_a$. A contradiction to the assumption that $\mathbf{p}$ is fixed-point.
Thus, the only possibility that we are left with is that $f_a(\mathbf{p})= \frac{\mathcal{S}_a(p_{-a}) + \mathcal{I}_a(p_{-a})}{2}$. But 
$f_a(\mathbf{p})=p_a= \frac{\mathcal{S}_a(p_{-a}) + \mathcal{I}_a(p_{-a})}{2} >\sup ~\mathcal{O}(p_{-a})$. A contradicton to our assumption that $a$ is minimally overdemanded. 

Thus, no set is overdemanded at $\mathbf{p}$.

Now we show that no set is overdemanded at $\mathbf{p}.$ Suppose for contradiction that set $T$ is overdemanded at $\mathbf{p}.$ Let $b \in T$ be the minimally underdemanded good. 
For all $x \in T$, if $h_x \in [p_x- \epsilon, p_x]$ \ where $\epsilon \rightarrow 0$ , and for all $x \notin T$, $h_x= p_x$, the set $T$ will be underdemanded\footnote{Note that by the definition of underdemanded set, $p_x> 0$}. Thus, $\mathcal{I}(p_{-b})< p_b$. 

Thus, condition $(II)$ or $(III)$ cannot be satisfied. If condition $(I)$ is satisfied, then $f_b(\mathbf{p}) = \mathcal{I}(p_{-b})< p_b.$ A contradiction.

Thus, we are left with the case where $f_b(\mathbf{p})= \frac{\mathcal{S}_b(p_{-b}) + \mathcal{I}_b(p_{-b})}{2}= p_b$. This implies that $\mathcal{S}_b(p_{-b})>p_b \ge \inf ~\mathcal{U}_b(p_{-b}) \ge \sup ~\mathcal{O}_b(p_{-b})$, where first inequality follows from the fact that $p_b$ is average of $\mathcal{I}_b(p_{-b})$ and $\mathcal{S}_b(p_{-b})$ and $\mathcal{S}_b(p_{-b}) >\mathcal{I}_b(p_{-b}) $ . By definition of $\mathcal{S}_b(p_{-b})$, some object is minimally overdemanded at $(p_b, p_{-b})$. 

However we have already shown that no object is minimally overdemanded at fixed-point, $\mathbf{p}$. 

Thus, no set is overdemanded or underdemanded at $\mathbf{p}.$ By Theorem 1  of \cite{DebasisTalman}, $\mathbf{p}$ is Walrasian equilibrium price vector.
\end{proof}

Combining Propositions \ref{prop:1} and \ref{prop 2}, and Tarski's fixed-point theorem, we establish the well-known result:

\begin{theorem}
    The set of Walrasian equilibrium price vectors forms a complete lattice.
\end{theorem}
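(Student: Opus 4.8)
The plan is to assemble the final theorem purely from the machinery that precedes it, since almost all the work has been front-loaded into the two propositions and the cited fixed-point theorem. First I would verify that the domain $[\mathbf{0},\mathbf{H}]$ is a complete lattice under the coordinatewise order: it is a product of finitely many compact intervals $[0,H]\subseteq\mathbb{R}$, and each such interval is a complete lattice (every subset has a supremum and infimum in $[0,H]$, namely its real sup/inf clamped to the endpoints, which in fact already lie in $[0,H]$ by boundedness), so the finite product inherits completeness with meets and joins taken componentwise. This is the ambient lattice on which Tarski's theorem will be applied.

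Next I would invoke Proposition~\ref{prop:1}, which establishes that $\mathbf{f}\colon[\mathbf{0},\mathbf{H}]\to[\mathbf{0},\mathbf{H}]$ is monotone (order-preserving) with respect to this coordinatewise order. Combined with the completeness of the domain, Tarski's fixed-point theorem applies verbatim and yields that the set $\mathrm{Fix}(\mathbf{f})=\{\mathbf{p}\in[\mathbf{0},\mathbf{H}]:\mathbf{f}(\mathbf{p})=\mathbf{p}\}$ is itself a complete lattice (in particular, it is nonempty, which already re-proves existence of a Walrasian equilibrium price vector). Then I would apply Proposition~\ref{prop 2}, which identifies $\mathrm{Fix}(\mathbf{f})$ with the set of Walrasian equilibrium price vectors; substituting this equality into the conclusion of Tarski's theorem gives exactly the statement that the set of Walrasian equilibrium price vectors forms a complete lattice.

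One subtlety worth spelling out — and the only place where the argument is more than a one-line citation chain — is that Tarski's theorem gives a complete lattice structure on $\mathrm{Fix}(\mathbf{f})$ whose meet and join operations need not coincide with the ambient componentwise meet and join of $[\mathbf{0},\mathbf{H}]$; the Tarski lattice order is the restriction of the ambient order, but suprema within $\mathrm{Fix}(\mathbf{f})$ are computed as the least fixed point above the given family, which could in principle sit strictly above the componentwise supremum. For the bare statement "forms a complete lattice" this does not matter, since that claim is only about the existence of meets and joins in the induced order, not about their explicit form. If one wanted the sharper classical conclusion that the Walrasian price lattice is a \emph{sublattice} of $(\mathbb{R}^{M},\le)$, one would additionally need to check that $\mathbf{f}$ is, say, inflationary below fixed points and deflationary above them, or argue directly from the structure of $\mathbf{f}$; but this strengthening is not needed here and I would not pursue it.

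I expect no real obstacle: the proof is a three-sentence composition — domain is a complete lattice, $\mathbf{f}$ is monotone (Prop.~\ref{prop:1}), so by Tarski $\mathrm{Fix}(\mathbf{f})$ is a complete lattice, and by Prop.~\ref{prop 2} that set equals the set of WE price vectors. The "hard part," such as it is, is simply making sure the hypotheses of Tarski's theorem are stated as actually satisfied (completeness of $[\mathbf{0},\mathbf{H}]$ and monotonicity of $\mathbf{f}$ on \emph{all} of it, including the region governed by branch $(IV)$, which Proposition~\ref{prop:1} does cover), after which the conclusion is immediate.
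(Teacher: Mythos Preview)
Your proposal is correct and matches the paper's approach exactly: the paper's proof of the theorem is literally the one-line statement ``Combining Propositions~\ref{prop:1} and~\ref{prop 2}, and Tarski's fixed-point theorem, we establish the well-known result,'' and you have simply fleshed out the hypothesis-checking (completeness of $[\mathbf{0},\mathbf{H}]$) and the logical chain more carefully. Your remark about the Tarski lattice not necessarily being a sublattice of the ambient order is a nice observation that the paper does not make, but as you note it is irrelevant to the stated theorem.
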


\section*{Acknowledgement}
We thank Debasis Mishra for the useful conversations and suggestions. 

\section*{Funding}
This research did not receive any specific grant from funding agencies in the public, commercial, or not-for-profit sectors.

\bibliographystyle{ecta}
\bibliography{WalrasOrder}

\begin{thebibliography}{6}
\newcommand{\enquote}[1]{``#1''}
\expandafter\ifx\csname natexlab\endcsname\relax\def\natexlab#1{#1}\fi

\bibitem[\protect\citeauthoryear{Demange and Gale}{Demange and Gale}{1985}]{demange1985}
\textsc{Demange, G. and D.~Gale} (1985): \enquote{The strategy structure of two-sided matching markets,} \emph{Econometrica: Journal of the Econometric Society}, 873--888.

\bibitem[\protect\citeauthoryear{Gale}{Gale}{1960}]{gale1960}
\textsc{Gale, D.} (1960): \enquote{Theory of linear Economic Models (New York, 1960),} .

\bibitem[\protect\citeauthoryear{Kaneko}{Kaneko}{1982}]{Kanecko1982}
\textsc{Kaneko, M.} (1982): \enquote{The Central Assignment Game and the Assignment Markets,} \emph{Journal of Mathematical Economics}, 10, 205--232.

\bibitem[\protect\citeauthoryear{Mishra and Talman}{Mishra and Talman}{2010}]{DebasisTalman}
\textsc{Mishra, D. and D.~Talman} (2010): \enquote{Characterization of the Walrasian equilibria of the assignment model,} \emph{Journal of Mathematical Economics}, 46, 6--20.

\bibitem[\protect\citeauthoryear{Quinzii}{Quinzii}{1984}]{quinzii1984}
\textsc{Quinzii, M.} (1984): \enquote{Core and competitive equilibria with indivisibilities,} \emph{International Journal of Game Theory}, 13, 41--60.

\bibitem[\protect\citeauthoryear{Shapley and Shubik}{Shapley and Shubik}{1972}]{shapley1972}
\textsc{Shapley, L.~S. and M.~Shubik} (1972): \enquote{The assignment game I: The core,} \emph{International Journal of game theory}, 1, 111--130.

\end{thebibliography}
\newpage

\end{document}